\tikzset{
-, % makes the edges directed
>=stealth, % makes the arrow heads bold
node distance=3cm, % specifies the minimum distance between two nodes. Change if necessary.
every state/.style={thick, fill=gray!10}, % sets the properties for each ’state’ node
initial text=$ $, % sets the text that appears on the start arrow
}
\newtheorem{theorem}{Theorem}[section]
\newtheorem{lemma}[theorem]{Lemma}
\theoremstyle{definition}
\newtheorem{definition}[theorem]{Definition}
\newcommand{\fzip}{\mbox{{\sf fzip}}}
\newcommand{\zip}{\mbox{{\sf zip}}}
\title{A Diamond Structure in the Transducer Hierarchy}
\author{Noah Kaufmann}	
\affiliation{Department of Mathematics, Indiana University, Bloomington, USA}
\keywords{Computer Science - Formal Languages and Automata Theory, Mathematics - Logic}
\begin{document}
\publicationdata
{vol. 27:3}
{2025}
{21}
{10.46298/dmtcs.8704}
{2021-11-13; 2021-11-13; 2022-08-05; 2023-01-18}
{2025-10-26}
\maketitle

\begin{abstract}

\bigskip
We answer an open question in the theory of transducer degrees on the existence of a diamond structure in the transducer hierarchy. Transducer degrees are the equivalence classes formed by word transformations which can be realized by a finite state transducer, which form an order based on which words can be transformed into other words. We provide a construction which proves the existence of a diamond structure, while also introducing a new function on streams which may be useful for proving more results about the transducer hierarchy.
\end{abstract}

\section{Introduction}

Finite state transducers (FSTs) are ubiquitous in computer science, and infinite streams are also common in many fields. Yet there are very few results on how to transform an arbitrary stream into another stream with an FST. We define a stream $\sigma$ as being above another stream $\tau$ if some FST $T$ can transduce $\sigma$ into $\tau$, with $\sigma$ and $\tau$ being the same degree if they can both be transduced into each other. The structure of these degrees, called the transducer hierarchy, has many parallels with Turing degrees. The main results that have been done thus far mostly deal with streams determined by polynomials, and in this paper we will present a new result that comes from a new operation on streams. This new operation, called $\fzip$, may be useful in proving additional results, and allows us a new class of functions to consider the degrees of: piecewise polynomials. 

\section{Definitions}

We will give some preliminary definitions with the goal of understanding the definition of a weight product, the key operation for all the results in this paper. For more definitions and background in this area, see \cite{1,2,3,4,5}. We begin by setting \textbf{2} = \{0,1\}, which we will use as our input and output alphabet for all of our transducers. We use $\textbf{2}^\infty$ to denote the set of all finite and infinite streams over $\textbf{2}$. We will focus only on finite state transducers of the following form: 

\begin{definition} A \textit{finite-state transducer} is a tuple $T = \langle Q, q_0,\delta,\lambda \rangle$ where $Q$ is a finite set of states, $q_0 \in Q$ is the initial state, $\delta : Q \times \textbf{2} \rightarrow Q$ is the transition function, and $\lambda : Q \times \textbf{2} \rightarrow \textbf{2}^*$ is the output function.  
\end{definition}

Note that $\delta$ and $\lambda$ can be extended ($\delta : Q \times \textbf{2}^* \rightarrow Q,\lambda : Q \times \textbf{2}^\infty \rightarrow \textbf{2}^\infty$) as follows: \\
$$\delta(q,\epsilon) = q, \delta(q,au) = \delta(\delta(q,a),u), \textrm{ where } q \in Q, a \in \textbf{2}, u \in \textbf{2}^*$$
$$\lambda(q, \epsilon) = \epsilon, \lambda(q, au) = \lambda(q,a) \cdot \lambda(\delta(q,a),u), \textrm{ where } q \in Q, a \in \textbf{2}, u \in \textbf{2}^\infty$$ 

The above equations correspond to inputting a (possibly infinite) stream of letters into $T$. This allows us to define a function $T$ on finite or infinite strings in \textbf{2}, by saying that $T(w)$ is equal to the output of the FST $T$ after inputting $w$. Formally speaking, this means that $T(w) = \lambda(q_0, w)$. Now we can make the following definition, which is the basis for the transducer hierarchy:

\begin{definition}  Let $T$ be an FST, and let $\sigma, \tau \in \textbf{2}^{\infty}$ be infinite sequences. We say that $T$ $transduces$ $\sigma$ to $\tau$, or that $\tau$ is the $T$-$transduct$ of $\sigma$, if $T(\sigma) = \tau$. In general, for any two infinite sequences $\sigma, \tau$ we say that $\sigma \geq \tau$ if there exists some $T$ so that $T(\sigma) = \tau$.
\end{definition}

This relation $\geq$ is reflexive, and can be shown to be transitive by composition of FSTs (See Lemma 8, \cite{1}). If for some $\sigma, \tau$ we have $\sigma \geq \tau$ but not vice versa, we say $\sigma > \tau$. If we do have $\sigma \geq \tau$ and $\tau \geq \sigma$ then we say that $\sigma \equiv \tau$, and we use $[\sigma]$ to denote the equivalence class of $\sigma$. We call $[\sigma]$ the \textit{degree} of $\sigma$. 

Now that we have defined what a transducer degree is, we will focus our attention on a particular subset of streams, namely the streams which are generated by functions in the sense of the following definition.

\begin{definition}
For a function $f$ from $\mathbb{N}$ to $\mathbb{N}$, we define $\langle f \rangle$ to be the stream given by

\begin{center}$\langle f \rangle = \prod_{i=0}^{\infty} 10^{f(i)} = 10^{f(0)}10^{f(1)}10^{f(2)} \ldots$\end{center} 

We will often use $\langle f \rangle$ to mean both the stream determined by $f$, as well as the degree of that stream $[\langle f \rangle]$. We also refer to a part of the stream of the form $10^{f(i)}$ as a block. 

\end{definition}

Having defined $\langle f \rangle$ in this way, some relatively simple initial results have been obtained in \cite{2}, which we state here.

\begin{lemma} Let $f: \mathbb{N} \rightarrow \mathbb{N}, a,b \in \mathbb{N}.$ We have the following equivalences and inequalities:\\

\begin{enumerate}
\item $\langle af(n) \rangle \equiv \langle f(n) \rangle, $ for $ a > 0$\\
\item $\langle f(n + a) \rangle \equiv \langle f(n) \rangle$\\
\item $\langle f(n) + a \rangle \equiv \langle f(n) \rangle$\\
\item $\langle f(n) \rangle \geq \langle f(an) \rangle, $ for $ a > 0$\\
\item $\langle f(n) \rangle \geq \langle af(2n) + bf(2n + 1) \rangle$\\
\end{enumerate}

\end{lemma}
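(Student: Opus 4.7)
The plan is to prove each of the five parts by exhibiting explicit finite-state transducers, since each item is a local transformation of the block structure $10^{f(n)}$---rescaling, shifting, deleting, or merging blocks---and each such transformation is realizable by a small FST. Items $1$--$3$ are equivalences and so require FSTs in both directions; items $4$--$5$ are one-directional.

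For part $1$, a one-state FST mapping $1 \mapsto 1$ and $0 \mapsto 0^a$ multiplies every block exponent by $a$, giving $\langle f(n) \rangle \geq \langle af(n) \rangle$; the reverse uses an $a$-state counter that emits a single $0$ only on every $a$-th input $0$ (the counter is always at $0$ when a $1$ arrives, since the blocks of $\langle af(n) \rangle$ have exponents divisible by $a$). Part $2$ in the $\geq$ direction uses states $0,1,\ldots,a$ that suppress output until $a$ ones have been read and then act as the identity, discarding the first $a$ blocks. The $\leq$ direction exploits that $\lambda$ can output any word of $\textbf{2}^*$ per transition: the very first transition outputs the fixed prefix $10^{f(0)} 10^{f(1)} \cdots 10^{f(a-1)} 1$, and the FST is the identity afterwards. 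Part $3$ mirrors part $1$: a one-state FST sending $1 \mapsto 10^a$ and $0 \mapsto 0$ proves $\langle f(n) \rangle \geq \langle f(n)+a \rangle$, and an $(a+1)$-state FST that swallows the first $a$ zeros after each $1$ gives the reverse direction.

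For part $4$, the FST uses states $0, 1, \ldots, a-1$ indexed by the current block number modulo $a$: reading a $1$ advances the state by one, and input is copied to output only while in state $0$, so the output is $10^{f(0)} 10^{f(a)} 10^{f(2a)} \cdots = \langle f(an) \rangle$. For part $5$, set $g(n) = af(2n)+bf(2n+1)$ and build an FST with two states $E$ and $O$, meaning \emph{inside an even-indexed input block} and \emph{inside an odd-indexed input block}. In $E$: emit $0^a$ per input $0$, and on input $1$ emit nothing and move to $O$ (this suppresses the $1$ that separates the two blocks being merged). In $O$: emit $0^b$ per input $0$, and on input $1$ emit $1$ and move to $E$ (starting the next merged block). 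Taking $O$ as the initial state handles the very first $1$ correctly and produces exactly $\langle g \rangle$.

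There is no deep obstacle here; the bulk of the work is checking that each construction produces the claimed stream and defines a total function on $\textbf{2}^{\mathbb{N}}$. The mildest subtlety is the $\leq$ direction of part $2$, where the first transition's output depends on the specific values $f(0),\ldots,f(a-1)$; this is legal because the FST is constructed after $f$ is given, and $\lambda$ is permitted to return any element of $\textbf{2}^*$.
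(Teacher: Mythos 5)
The paper does not actually prove this lemma; it is quoted verbatim from \cite{2} ("some relatively simple initial results have been obtained in \cite{2}, which we state here"), so there is no in-paper argument to compare against. Your route---exhibiting an explicit small FST for each direction of each item---is the standard and expected way to establish these facts, and all five constructions are correct in substance. In particular your part 5 machine (initial state $O$, suppressing the separating $1$ and rescaling zeros by $a$ in $E$ and by $b$ in $O$) does produce exactly $10^{af(0)+bf(1)}10^{af(2)+bf(3)}\cdots$. It is worth noting that items 1, 3, and 5 are also immediate from the paper's weight-product machinery (they are $\alpha \otimes f$ for the single weights $\langle a,0\rangle$, $\langle 1,a\rangle$, and $\langle a,b,0\rangle$ respectively), which would let you discharge those three via Theorem 2.6 instead of bespoke automata; your direct constructions are more elementary and self-contained, which is arguably preferable since the weight-product theorem is itself built on facts of this kind.

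Two small repairs are needed. In the $\geq$ direction of part 2, "suppress output until $a$ ones have been read and then act as the identity" is off by one block: after the $a$-th input $1$ the machine is still inside block $a-1$, so switching to the identity there would emit the spurious prefix $0^{f(a-1)}$ before the first $1$ of $\langle f(n+a)\rangle$. The identity phase must begin only at the $(a+1)$-th $1$; equivalently, the top state should continue to swallow $0$'s and only start copying upon reading its next $1$. In part 4 you do not fix the initial state, and with start state $0$ and the natural reading of "reading a $1$ advances the state" the first block gets suppressed rather than copied; starting in state $a-1$ (or declaring that the copying state is the one \emph{entered} on block indices divisible by $a$) fixes this. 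Both are routine bookkeeping and do not affect the correctness of the overall approach.
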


One interesting consequence of the third equality is that any polynomial with a positive leading coefficient can be thought of as a stream and thus associated with a transducer degree, even if some of its values happen to be negative. For instance, the polynomial $(n-2)^3$ can't directly be interpreted as a stream, since it is negative for $n = 0, 1$. However, if we take $(n-2)^3 + 8$, then this polynomial is nonnegative, and therefore corresponds to a stream (and thus a degree). So even though it's technically incorrect, it will be convenient sometimes to refer to a degree such as $\langle (n-2)^3 \rangle$, when we mean more precisely the degree $\langle (n-2)^3 + k \rangle$ for any $k \geq 8$. Similarly, the first equality allows us to refer to the degree of a function with rational coefficients, where we really mean the degree of the corresponding function multiplied by the appropriate scalar to eliminate any fractional coefficients.

Now we are ready to start defining weight products. We will not provide the full proof of the main result we need (Theorem 2.6), but a more detailed explanation can be found in \cite{2}. We begin by defining a weight.

\begin{definition} A \textit{weight} is a tuple $\alpha = \langle a_0, a_1, \ldots , a_{k-1},b \rangle \in \mathbb{Q}^{k+1}$ with each $a_i \geq 0$. If $a_i = 0$ for all $i$ then we say the weight is constant. To distinguish between weights and tuples of weights, weights will not be bolded but tuples of weights will, except potentially in cases where there is only one weight in the tuple.
\end{definition}

Given a weight $\alpha$ as above and a function $f: \mathbb{N} \rightarrow \mathbb{N}$ we can define $\alpha \cdot f$ as:

$$\alpha \cdot f = a_0f(0) + a_1f(1)+ \ldots + a_{k-1}f(k-1)+b$$

We are ready to define the weight product. Let $\bm{\alpha} = \langle \alpha_0, \alpha_1, \ldots , \alpha_{m-1} \rangle$ be a tuple of weights, with $\bm{\alpha^\prime}$ being the cyclic shift $\langle \alpha_1, \alpha_2, \ldots , \alpha_{m-1}, \alpha_0 \rangle$.

Then the weight product of $\bm{\alpha}$ with $f$, written as $\bm{\alpha} \otimes f$, is defined in the following way:

$$(\bm{\alpha} \otimes f)(0) = \alpha_0 \cdot f$$
$$(\bm{\alpha} \otimes f)(n+1) = (\bm{\alpha^\prime} \otimes S^{|\alpha_0|-1}(f))(n)$$ 

Here $S^k(f)(n) = f(n+k)$ and $|\alpha_0|$ indicates the length of the tuple $\alpha_0$. We call a weight product $natural$ if $\bm{\alpha} \otimes f(n) \in \mathbb{N}$ for all $n$. Note that since $\bm{\alpha}$ is a finite tuple of finite tuples in $\mathbb{Q}$, we can take the LCM of all of the denominators and multiply through to make the product natural. Since this does not change the degree of the resulting function (by Lemma 2.4), from now on we will assume that all weight products are natural. We also define the length of a tuple of weights to be $||\bm{\alpha}|| = \sum_{i=0}^{m-1} (|\alpha_i| - 1)$. (By $|\alpha_i|$ we mean simply the number of elements in that weight.) Finally, we note that in the case where the tuple of weights only contains one weight, i.e. $\bm{\alpha} = (\alpha_0)$, we will often use $\alpha_0 \otimes f$ to denote $\bm{\alpha} \otimes f$.   

The following image provides a more intuitive picture of how the weight product works, by showing pictorially how to compute the weight product of the tuple of weights $\bm{\alpha} = \langle \alpha_0, \alpha_1 \rangle$ with an arbitrary function $f(n)$, where $\alpha_0 = \langle 2,4,6,8 \rangle, \alpha_1 = \langle 1,7,4 \rangle$:

\begin{enumerate}
\item[] 
\begin{tikzpicture}[grow=up]
\Tree [.{$(\bm{\alpha} \otimes f)(0) = 2f(0)+4f(1)+6f(2)+8$} [.{$\times 6$} $f(2)$ ] [.{$\times 4$} $f(1)$ ] [.{$\times 2$} $f(0)$ ] ]
\end{tikzpicture}
\hskip 0.1in
\begin{tikzpicture}[grow=up]
\Tree [.{$(\bm{\alpha} \otimes f)(1) = f(3)+7f(4)+4$} [.{$\times 7$} $f(4)$ ] [.{$\times 1$} $f(3)$ ] ]
\end{tikzpicture}
\end{enumerate}

\begin{enumerate}
\item[] 
\begin{tikzpicture}[grow=up]
\Tree [.{$(\bm{\alpha} \otimes f)(2) = 2f(5)+4f(6)+6f(7)+8$} [.{$\times 6$} $f(7)$ ] [.{$\times 4$} $f(6)$ ] [.{$\times 2$} $f(5)$ ] ]
\end{tikzpicture}
\hskip 0.1in
\begin{tikzpicture}[grow=up]
\Tree [.{$(\bm{\alpha} \otimes f)(3) = f(8)+7f(9)+4$} [.{$\times 7$} $f(9)$ ] [.{$\times 1$} $f(8)$ ] ]
\end{tikzpicture}
\end{enumerate}

This also provides us with a better notion of what the length $||\bm{\alpha}||$ represents. For this $\bm{\alpha}$, we have $||\bm{\alpha}|| = (4-1) + (3-1) = 5$, which is exactly how many values of $f$ we go through after applying every weight. 

The key property of weight products is that they can replace transducers when used on a certain class of streams known as \textit{spiralling functions}. For the purposes of this paper, we are interested in two subclasses of spiralling functions: polynomials and piecewise polynomials. In this paper, by ``piecewise polynomial" we refer to a slightly different definition than the standard one. Here, we define a piecewise polynomial to be a function which, given a finite partition of $\mathbb{N}$, agrees with a polynomial on each element of this partition. In particular, the partitions we use will be residue classes modulo some $n \in \mathbb{N}$. We will refer to a polynomial defined on a particular element of the partition as a ``piece" of the piecewise polynomial.

Now that we have defined weight products as well as the class of functions we wish to apply them to, we can proceed to state the main result that we need for the rest of the paper. The full proof of this result can be found in (Theorem 21, \cite{3}).

\begin{theorem} Let $f,g: \mathbb{N} \rightarrow \mathbb{N}$ be (possibly piecewise) polynomials. Then $\langle g \rangle \geq \langle f \rangle$ if and only if there exists a tuple of weights $\bm{\alpha}$ and integers $n_0,m_0$ such that $S^{n_0}(f) = \bm{\alpha} \otimes S^{m_0}(g)$.
\end{theorem}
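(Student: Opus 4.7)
The plan is to prove the two directions separately; the ``if'' direction is a direct FST construction, while the ``only if'' direction requires an analysis of the asymptotic behavior of the FST.

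For the ``if'' direction, suppose $S^{n_0}(f) = \bm{\alpha} \otimes S^{m_0}(g)$ with $\bm{\alpha} = \langle \alpha_0, \ldots, \alpha_{m-1} \rangle$. I would build a transducer $T$ from three pieces glued together. An initial startup component exploits the freedom that $\lambda$ outputs words in $\textbf{2}^*$: on the first input symbol it emits the hardcoded finite prefix $10^{f(0)} 10^{f(1)} \cdots 10^{f(n_0-1)}$ and then enters a skip loop that consumes the first $m_0$ input blocks without producing output. A cyclic main component consisting of $m$ sub-automata handles the steady state: sub-automaton $i$ reads the next $|\alpha_i| - 1$ input blocks, emits an output $1$ at the start, outputs $a_j$ zeros for each input zero it reads in the $j$-th of those blocks, appends $b$ extra zeros at the end (using the final entry of $\alpha_i$), and hands off to sub-automaton $(i+1) \bmod m$. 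Naturality of the weight product ensures each $a_j$ is a nonnegative integer, so no fractional outputs arise.

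For the ``only if'' direction, suppose $T(\langle g \rangle) = \langle f \rangle$ with state set $Q$; I want to extract $\bm{\alpha}$ from the eventually periodic behavior of $T$. Assume $g$ is nonconstant (the constant case is immediate), so $g(n) \to \infty$. Once $g(n) > |Q|$, the $0$-run inside the $n$-th input block forces $T$ into a cycle of state transitions of some length $\ell$ outputting some fixed word of $r$ zeros per iteration, bracketed by bounded transients as the cycle is entered and exited. First I would argue that this cycle outputs no $1$s: otherwise the FST would produce a growing number of bounded-length output blocks per input block, which is incompatible with $\langle f \rangle$ when $f$ is polynomial of positive degree. The total output while reading the $n$-th input block is therefore a single $0$-run of length $\rho \cdot g(n) + c_n$, where $\rho = r/\ell$ depends only on the entering state and $c_n$ depends on the surrounding transitions and on $g(n) \bmod \ell$.

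Applying pigeonhole a second time to the state sequence of $T$ at the boundaries between input blocks, this sequence is eventually periodic with some period $L$; let $m_0$ be the transient length. Within one steady-state period $T$ reads $L$ input blocks and emits some fixed number $m$ of output $1$s. Between two consecutive output $1$s the $0$-count is a sum of contributions $\rho_j \cdot g(n+j)$ from each input block the FST has fully or partially consumed while producing that output block, plus a constant absorbing transients and boundary corrections; this is precisely a weight $\alpha_i$ applied to a shift of $g$. Assembling the $m$ such weights per period yields the tuple $\bm{\alpha}$ with $||\bm{\alpha}|| = L$ and $S^{n_0}(f) = \bm{\alpha} \otimes S^{m_0}(g)$, where $n_0$ is the number of output blocks produced before the steady-state period begins. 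The hardest part of this direction is the bookkeeping: one must verify that the recovered linear combinations align with the cyclic shifts in the definition of $\otimes$, and one must refine $L$ to be a common multiple of the state-cycle period, the modular period of $g \bmod \ell$, and the piecewise-polynomial period of $f$, so that each block's contribution is genuinely linear in $g$ rather than only eventually so.
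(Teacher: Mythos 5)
The paper does not actually prove this theorem; it is imported from the literature (the text points to Theorem 21 of \cite{3}), so there is no in-paper argument to compare yours against. Your sketch follows the same architecture as the standard proof in the cited sources: an explicit FST realizing the weight product for the ``if'' direction, and a two-level pumping/periodicity analysis (first within long zero-runs, then across block boundaries) for the ``only if'' direction. The overall shape is right, and the bookkeeping issues you flag at the end (aligning with the cyclic shifts, passing to a common refinement of the various periods) are indeed where the real work lies.

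There is, however, one step that fails as written. In the ``if'' direction you assert that ``naturality of the weight product ensures each $a_j$ is a nonnegative integer.'' That is not what naturality gives you: it guarantees only that the \emph{values} $(\bm{\alpha} \otimes g)(n)$ are natural numbers, while the individual coefficients $a_j$ remain arbitrary nonnegative rationals, and the final entry $b$ of a weight may be negative (Definition 2.5 constrains only the $a_i$ to be nonnegative). So a sub-automaton cannot simply ``output $a_j$ zeros per input zero''; it must count input zeros modulo the denominator $q_j$ of $a_j = p_j/q_j$ and emit $p_j$ zeros for every $q_j$ consumed, and it must withhold a bounded buffer of zeros so that a negative constant term is realized by suppressing output rather than by retracting it. Both fixes are routine but necessary for the construction to be an FST at all. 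A smaller gap on the other side: your argument that the internal cycle emits no $1$s assumes $f$ is unbounded; you dismiss the constant case for $g$ but not for $f$, and a bounded $f$ genuinely does allow $1$s inside the cycle (that case is handled separately by a constant weight $\langle b \rangle$). With those repairs the sketch matches the known proof.
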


This theorem tells us that if we want to show that one polynomial degree is above another, we can consider weight products, rather than trying to figure out a transducer directly. The following theorem will give us a useful result for comparing polynomial degrees with non-polynomial degrees.

\begin{theorem}
Let $f : \mathbb{N} \rightarrow \mathbb{N}$ be a (possibly piecewise) polynomial, and $\sigma \in 2^\mathbb{N}$. Then $\langle f \rangle \geq \sigma$ if and
only if $\sigma \equiv \langle \alpha \otimes S^{n_0}(f) \rangle$ for some integer $n_0 \geq 0$, and a tuple of weights $\alpha$.
\end{theorem}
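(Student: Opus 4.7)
The plan is to prove the two directions separately: the ``if'' direction is a short application of Theorem 2.6 and Lemma 2.4, while the ``only if'' direction requires an analysis of the transducer's structured output on block-coded input and then reduces to Theorem 2.6 at the end.

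For the ``if'' direction, suppose $\sigma \equiv \langle \alpha \otimes S^{n_0}(f) \rangle$. Part (2) of Lemma 2.4 gives $\langle f \rangle \equiv \langle S^{n_0}(f) \rangle$, so writing $h := S^{n_0}(f)$ it suffices to show $\langle h \rangle \geq \langle \alpha \otimes h \rangle$. Since $h$ is piecewise polynomial so is $\alpha \otimes h$, and Theorem 2.6 applied with source $h$, target $\alpha \otimes h$, witness weight tuple $\alpha$, and shifts $n_0 = m_0 = 0$ immediately gives the inequality, from which $\langle f \rangle \geq \sigma$ follows by transitivity.

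For the ``only if'' direction, assume $T(\langle f \rangle) = \sigma$ for some FST $T = \langle Q, q_0, \delta, \lambda \rangle$, and let $q_i$ denote the state of $T$ immediately before reading the leading $1$ of the block $10^{f(i)}$. Because $\delta(\cdot,0)$ is an endofunction on the finite set $Q$, every trajectory of $0$-transitions eventually enters a cycle; taking $p$ to be a common multiple of all such cycle lengths, for $f(i)$ sufficiently large the successor $q_{i+1}$ depends only on $q_i$ and $f(i) \bmod p$. Since $f$ is piecewise polynomial, $f(i) \bmod p$ is eventually periodic in $i$, so there exist a threshold $n_0$ and period $M$ with $q_i$ depending only on $j := (i - n_0) \bmod M$ for $i \geq n_0$. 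The output $w_i$ on block $i$ then factors as $u_j\, V_j^{k_j(f(i))}\, x_j$ with $u_j, V_j, x_j$ depending only on $j$ and $k_j(f(i))$ an affine function of $f(i)$. The central claim is that $\sigma \equiv \langle g \rangle$ for some piecewise polynomial $g$. This is immediate in the \emph{clean} case where every $V_j \in 0^\ast$: each $w_i$ contributes a bounded number of $10^\ast$-blocks whose lengths are either constants (from $u_j$, $x_j$, and block-to-block transitions) or an affine function of $f(i)$ (the single gap that absorbs the $V_j^{k_j(f(i))}$ zero-run). In the remaining \emph{messy} case where some $V_j$ contains a $1$, block $w_i$ produces $\Theta(f(i))$ blocks following a fixed cyclic pattern of constant lengths; one designs an auxiliary pair of FSTs that compresses each such periodic run of constant-length blocks into a single zero-run encoding the count and decompresses it back, establishing FST-equivalence with a stream in the clean case.

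Finally, combining $\langle f \rangle \geq \sigma \equiv \langle g \rangle$ with Theorem 2.6 yields a weight tuple $\alpha$ and shifts $n_0^\prime, m_0^\prime$ with $S^{n_0^\prime}(g) = \alpha \otimes S^{m_0^\prime}(f)$, so $\sigma \equiv \langle g \rangle \equiv \langle \alpha \otimes S^{m_0^\prime}(f) \rangle$ via Lemma 2.4, which is the required form. The main obstacle is the compression/decompression argument in the messy case: the decompression FST must emit a number of output blocks proportional to the length of a single input zero-run, which is possible precisely because the required count is linearly encoded in that length; making this compatible with the $M$-periodic $j$-structure is the fiddly part of the argument, while all other steps amount to routine bookkeeping once the periodic decomposition of $q_i$ and $w_i$ is in hand.
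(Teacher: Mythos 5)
The paper does not actually prove this statement: Theorem 2.7 is imported from the cited literature (alongside Theorem 2.6, attributed to \cite{3}), so there is no in-paper argument to compare yours against. Judged on its own, your ``if'' direction is complete and correct: $\langle f\rangle \equiv \langle S^{n_0}(f)\rangle$ by Lemma 2.4(2), and Theorem 2.6 applied to the pair $h$, $\alpha\otimes h$ with zero shifts gives $\langle h\rangle \ge \langle \alpha\otimes h\rangle$, noting that a weight product of a piecewise polynomial is again piecewise polynomial. The closing move of the ``only if'' direction is also sound: once you know $\sigma\equiv\langle g\rangle$ for some piecewise polynomial $g$, Theorem 2.6 converts $\langle f\rangle\ge\langle g\rangle$ into the required weight-product form, so the whole burden rests on the central claim that every FST image of $\langle f\rangle$ is equivalent to a piecewise polynomial stream.

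That central claim is where there is a genuine gap, concentrated exactly where you flag it. In the ``messy'' case your compression FST must emit one symbol per repetition of $V_j$, which presupposes that it can locate where the run $V_j^{k_j(f(i))}$ ends. But the boundary word $x_j u_{j+1}$ may itself continue the $V_j$-periodicity, and $V_{j+1}$ may be a conjugate power of $V_j$, in which case consecutive runs merge seamlessly and no finite-lookahead device can see the boundary at all; the count your FST extracts is then not $k_j(f(i))$ but an unintended sum across several $i$. A complete proof must split this into two subcases: either the periodicity genuinely persists across the junction, in which case one re-groups the merged runs into a coarser clean decomposition (whose exponents are sums of the affine quantities $k_j(f(i))$, hence still piecewise polynomial), or it breaks, in which case a Fine--Wilf-type argument bounds the window within which it breaks, so a bounded delay absorbed into the state set suffices to detect the boundary. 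Neither subcase is addressed. Two smaller omissions: the blocks $i$ on constant pieces of $f$ (where $f(i)$ never becomes ``sufficiently large'') need to be folded into the periodicity argument for $q_i$ separately, and the degenerate outputs --- $\sigma$ ultimately periodic, $V_j=\epsilon$, or only finitely many $1$'s in $\sigma$ --- require the separate observation that all ultimately periodic streams are equivalent to $\langle c\rangle$ for a constant $c$, which is realized by a constant weight applied to $f$. With those pieces supplied the architecture you describe does yield the theorem, but as written the messy case asserts rather than establishes the key equivalence.
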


These two theorems justify the use of weight products as a replacement for transducers. This greatly simplifies things, since we no longer need to consider how to transduce a stream by any transducer. Instead, we can consider weight products, which are much easier to work with. Having defined weight products, we now turn to a new operation on streams: $\fzip$.

\section{Basic Results}

Given two streams $\sigma = \sigma_0\sigma_1\sigma_2 \cdots$ and $\tau = \tau_0\tau_1\tau_2 \cdots$, one natural operation to define on them is the $\zip$ operation, also called merge in some contexts. This operation is defined by interleaving the two streams, that is, alternating elements from each stream. The formal definition is as follows:

\begin{definition} The $i$th term of the stream $\zip(\sigma,\tau)$ is given by the following equation:
\end{definition}

\begin{equation*}
\zip(\sigma,\tau)_i=\begin{cases}
          \sigma_{\frac{i}{2}} \quad &\text{i even}\\
          \tau_{\frac{(i-1)}{2}} \quad &\text{i odd}\\
     \end{cases}
\end{equation*}\\ 

Intuitively, this is very easy to understand: the stream will simply start with the first element of $\sigma$, followed by the first element of $\tau$, then the second element of $\sigma$, and so on. This operation has many applications, and in the field of transducer degrees one property is immediately obvious: $\zip(\sigma,\tau) \geq \sigma$ and $\zip(\sigma,\tau) \geq \tau$. The fact that an FST can be constructed for each of these inequalities is easy to prove, as shown in the following image:

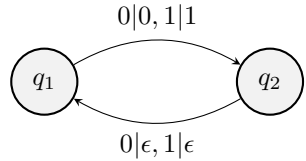
\begin{figure}[ht]
\centering
\begin{tikzpicture}[->]
\node[state] (q2) {$q_1$};
\node[state, right of=q2] (q3) {$q_2$};
\draw 
(q2) edge[bend left, above] node{$0|0, 1|1$} (q3)
(q3) edge[bend left, below] node{$0|\epsilon ,1|\epsilon$} (q2);
\end{tikzpicture}
\caption{Making $q_1$ the initial state proves that $\zip(\sigma,\tau) \geq \sigma$, and making $q_2$ the initial state proves that $\zip(\sigma,\tau) \geq \tau$.}
\end{figure}

However, when working with polynomial degrees, $\zip$ is not quite the right operation. The main problem with $\zip$ is that if we apply it to two polynomial streams, it's not clear that the result is a polynomial stream, and it's also not clear how to define it in terms of the original streams. For example, suppose we wanted to take $\sigma = \zip(\langle n \rangle, \langle n^2 \rangle)$. The beginning of this stream is given by $111100110000100001001...$, which corresponds to a function $f(n)$ with values as shown in the table below.

\begin{center}
  \begin{tblr}{hlines, vlines, columns={co=1}} 
    n & 0 & 1 & 2 & 3 & 4 & 5 & 6 & 7    \\
    $f(n)$ & 0 & 0 & 0 & 2 & 0 & 4 & 4 & 2   \\
  \end{tblr}
\end{center} 

It's not at all clear based on the above table that this function is even a polynomial, and since it has at least four zeros it must be at least a quartic if it is. So $\zip$ is difficult to work with if we are dealing with polynomial streams. Fortunately, there is a similar operation which is very useful for our purposes. \\

\begin{definition}
Let $f$ and $g$ be functions from $\mathbb{N} \rightarrow \mathbb{N}$. \\ 

We define $\fzip(f,g)$ to be the function

\begin{equation*}
\fzip(f,g)(n)=\begin{cases}
          f(\frac{n}{2}) \quad &\text{n even}\\
          g(\frac{n-1}{2}) \quad &\text{n odd}\\
     \end{cases}
\end{equation*}
\end{definition}

Therefore $\langle \fzip(f,g) \rangle$ is

\begin{center}$\langle \fzip(f,g) \rangle = \prod_{i=0}^{\infty} 10^{\fzip(f,g)(i)} = 10^{f(0)}10^{g(0)}10^{f(1)}10^{g(1)} \ldots$\end{center} 

And now the advantage of $\fzip$ becomes clear: instead of interleaving individual letters, we can interleave the blocks of zeros determined by our functions. Just as with $\zip$, it's immediately clear that this stream lies above the streams corresponding to $f$ and $g$, in this case with a slightly more complex transducer. 

\begin{figure}[ht]
\centering
\begin{tikzpicture}[->]
\node[state] (q2) {$q_1$};
\node[state, right of=q2] (q3) {$q_2$};
\draw (q2) edge[loop, above] node{$0|0$} (q2)
(q2) edge[bend left, above] node{$1|\epsilon$} (q3)
(q3) edge[loop, above] node{$0|\epsilon$} (q3)
(q3) edge[bend left, below] node{$1|1$} (q2);
\end{tikzpicture}
\caption{Similar to before, making $q_2$ the initial state proves that $\fzip(f,g) \geq f$, and making $q_1$ the initial state proves that $\fzip(f,g) \geq g$.}
\end{figure}
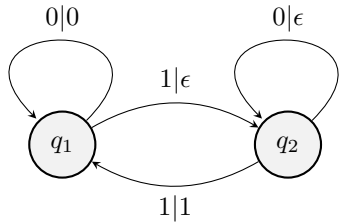

Due to this similarity, we do actually have one identity relating $\zip$ and $\fzip$, in the case where we are zipping a functional stream with itself.

\begin{lemma}
For all functions $f$ from $\mathbb{N}$ to $\mathbb{N}$, $\zip(\langle f \rangle,\langle f \rangle)= \langle \fzip(0,2f) \rangle$, and this stream is in the same degree as $\langle f \rangle$.
\end{lemma}

\begin{proof}
We know that \begin{center}$\langle f \rangle = \prod_{i=0}^{\infty} 10^{f(i)} = 10^{f(0)}10^{f(1)}10^{f(2)} \ldots$\end{center}

Zipping this stream with itself produces the following stream: \\

$\zip(\langle f \rangle, \langle f \rangle) = 110^{2f(0)}110^{2f(1)}110^{2f(2)} \ldots$ \\

Because of the pattern of $11$'s, this stream corresponds to a function which alternates between being identically zero and two times the original function, i.e. $\fzip(0, 2f)$. Since it can be written as a $\zip$ of $\langle f \rangle$, we know that its degree is greater than or equal to the degree of $\langle f \rangle$, and applying a transducer which doubles every letter to $\langle f \rangle$ gives us the reverse inequality. Therefore these streams are in the same degree.
\end{proof}

A less obvious property is that $\fzip$ is not symmetric in general, so $\langle \fzip(f,g) \rangle$ need not equal $\langle \fzip(g,f) \rangle$. This was proven inadvertently in \cite{5}, where a careful reading of the main proof in terms of $\fzip$ gives us a specific example of two functions $f,g$ which satisfy $\langle \fzip(f,g) \rangle \neq \langle \fzip(g,f) \rangle$. However, there are two classes of functions which do allow $\fzip$ to be symmetric, as detailed in the following two lemmas.

\begin{lemma}
For all functions $f,g: \mathbb{N} \rightarrow \mathbb{N}$, $\langle \fzip(f(n),g(n)) \rangle \equiv \langle \fzip(g(n), f(n+1)) \rangle$.
\end{lemma}

\begin{proof}
This is easiest to understand in the form of a ``proof by table". Consider the following table of values for $h(n) = \fzip(f(n),g(n))$:

\begin{center}
  \begin{tblr}{hlines, vlines, columns={co=1}} 
    n & 0 & 1 & 2 & 3 & 4 & 5 & 6 & 7    \\
    $h(n)$ & f(0) & g(0) & f(1) & g(1) & f(2) & g(2) & f(3) & g(3)   \\
  \end{tblr}
\end{center} 

Now if we delete the first block of the stream $\fzip(f(n),g(n))$, which an FST can do, this corresponds to the following table:

\begin{center}
  \begin{tblr}{hlines, vlines, columns={co=1}} 
    n & 0 & 1 & 2 & 3 & 4 & 5 & 6 & 7    \\
    $h'(n)$ & g(0) & f(1) & g(1) & f(2) & g(2) & f(3) & g(3) & f(4)   \\
  \end{tblr}
\end{center}

Therefore $h'(n) = g(\frac{n}{2})$ for even $n$, and $h'(n) = f(\frac{n+1}{2})$ for odd $n$. So by definition, $h'(n) = \fzip(g(n), f(n+1))$. This shows that $\langle \fzip(f(n), g(n)) \rangle \geq \langle \fzip(g(n),f(n+1)) \rangle$. The reverse inequality is very similar, except that we need to add the block $10^{f(0)}$ to the beginning of the stream $\fzip(g(n), f(n+1))$, which again can be done by an FST. Therefore $\langle \fzip(f(n),g(n)) \rangle \equiv \langle \fzip(g(n), f(n+1)) \rangle$.
\end{proof}

With this lemma, we can now show that there are two classes of functions for which $\fzip$ is symmetric: linear and exponential functions.

\begin{lemma}
Suppose $f(n) = an+b$ or $f(n) = ab^n$ for some $a,b \in \mathbb{N}$, and let $g$ be any function from $\mathbb{N}$ to $\mathbb{N}$. Then $\langle \fzip(f(n),g(n)) \rangle \equiv \langle \fzip(g(n),f(n)) \rangle$.
\end{lemma}

\begin{proof}
We have two distinct cases: $f(n) = an+b$ and $f(n) = ab^n$.

Case 1: $f(n) = an+b$

First, apply the previous lemma to $\fzip(f(n),g(n))$ to say that $\langle \fzip(f(n),g(n)) \rangle \equiv \langle \fzip(g(n), f(n+1)) \rangle$. The function $\fzip(g(n), f(n+1))$ has the form

\begin{equation*}
\fzip(g(n), f(n+1))=\begin{cases}
          g(\frac{n}{2}) \quad &\text{n even}\\
          a(\frac{n+1}{2})+b \quad &\text{n odd}\\
     \end{cases}
\end{equation*}

We can rewrite this as

\begin{equation*}
\fzip(g(n), f(n+1))=\begin{cases}
          g(\frac{n}{2}) \quad &\text{n even}\\
          a(\frac{n-1}{2})+a+b \quad &\text{n odd}\\
     \end{cases}
\end{equation*}

We can now delete $a$ zeroes from each odd block in the corresponding stream. Since this is a reversible operation, the stream corresponding to $\fzip(g(n),f(n+1))$ is in the same degree as the function corresponding to the stream below:

\begin{equation*}
h(n)=\begin{cases}
          g(\frac{n}{2}) \quad &\text{n even}\\
          a(\frac{n-1}{2})+b \quad &\text{n odd}\\
     \end{cases}
\end{equation*}

But this new function $h(n)$ is just $\fzip(g(n),f(n))$. Therefore if $f(n) = an+b$, $\langle \fzip(f(n),g(n)) \rangle \equiv \langle \fzip(g(n),f(n)) \rangle$. \\

Case 2: $f(n) = ab^n$

This case will proceed almost identically to the first case.

As before, we start by applying the previous lemma to $\fzip(f(n),g(n))$ to say that $\langle \fzip(f(n),g(n)) \rangle \equiv \langle \fzip(g(n), f(n+1)) \rangle$. The function $\fzip(g(n), f(n+1))$ now has the form

\begin{equation*}
\fzip(g(n), f(n+1))=\begin{cases}
          g(\frac{n}{2}) \quad &\text{n even}\\
          ab^{\frac{n+1}{2}} \quad &\text{n odd}\\
     \end{cases}
\end{equation*}

As in Case 1, this can be rewritten:

\begin{equation*}
\fzip(g(n), f(n+1))=\begin{cases}
          g(\frac{n}{2}) \quad &\text{n even}\\
          bab^{\frac{n-1}{2}} \quad &\text{n odd}\\
     \end{cases}
\end{equation*}

Then we divide each odd block by $b$, which is a reversible FST operation.

\begin{equation*}
h(n)=\begin{cases}
          g(\frac{n}{2}) \quad &\text{n even}\\
          ab^{\frac{n-1}{2}} \quad &\text{n odd}\\
     \end{cases}
\end{equation*}

Now $h(n)$ is just $\fzip(g(n),f(n))$, and therefore $\langle \fzip(f(n),g(n)) \rangle \equiv \langle \fzip(g(n),f(n)) \rangle$. 
\end{proof}

We observe that while it may be possible for functions which are not linear or exponential to have this property (of allowing $\fzip$ to be symmetric) it is worth pointing out that the proof relies on the defining characteristics of linear and exponential functions, namely that for linear functions $f(n+1) = f(n) + a$, and for exponential functions $f(n+1) = bf(n)$. Since our operations on individual blocks are restricted to exactly these operations (addition/subtraction and division/multiplication) this may suggest that these are the only types of functions that could work for this lemma, or at least that expanding this lemma to other types of functions may be difficult. 

We conclude this section by considering a couple of natural questions about $\fzip$: under what circumstances can we compare one $\fzip$ with another, and how can we compare an $\fzip$ of two functions $f$ and $g$ with some third function $h$? For the first question, intuitively we would think that if $f_1(n), g_1(n)$ are above $f_2(n), g_2(n)$ respectively in the transducer hierarchy, then $\fzip(f_1(n), g_1(n))$ should be above $\fzip(f_2(n), g_2(n))$. However, the main difficulty is that while $\fzip$ intertwines two functions, the weight product can't be easily intertwined in the same way. For the second question, we can provide a very weak result which may illustrate the difficulty of this problem.

\begin{lemma}
Let $f,g,h$ be functions from $\mathbb{N}$ to $\mathbb{N}$. Suppose that there exist weights $\bm{\alpha} = (\alpha_0, \alpha_1, ... , \alpha_{n-1})$ and $\bm{\beta} = (\beta_0, \beta_1, ... \beta_{m-1})$ with $f = \bm{\alpha} \otimes h$ and $g = \bm{\beta} \otimes h$. Further suppose that $n = m$ and for all $i$,  $|\alpha_i| = |\beta_i|$. Then $\langle h(n) \rangle \geq \langle \fzip(f(2n), g(2n+1)) \rangle$.
\end{lemma}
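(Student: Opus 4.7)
The plan is to apply Theorem 2.6 by exhibiting a single tuple of weights $\bm{\gamma}$ with $\bm{\gamma} \otimes h = fzip(f(2n), g(2n+1))$ on the nose, so no shifts are required. First, I would unfold the outer definition: a direct computation from the definition of $fzip$ shows that $fzip(f(2n), g(2n+1))(k) = f(k)$ when $k$ is even and $g(k)$ when $k$ is odd, so the target stream corresponds to the alternating sequence $f(0), g(1), f(2), g(3), \ldots$.

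The construction of $\bm{\gamma}$ is simply to interleave $\bm{\alpha}$ and $\bm{\beta}$. Set $N = n$ if $n$ is even and $N = 2n$ if $n$ is odd, so that $N$ is always even and a multiple of $n$; then define $\bm{\gamma} = (\gamma_0, \ldots, \gamma_{N-1})$ by $\gamma_i = \alpha_{i \bmod n}$ when $i$ is even and $\gamma_i = \beta_{i \bmod n}$ when $i$ is odd. The choice of $N$ guarantees that the parity of $i \bmod N$ matches the parity of $i$, which is exactly what is needed for $\bm{\gamma}$ to describe a consistent cyclic pattern under the recursive definition of the weight product.

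The main step is the verification that $\bm{\gamma} \otimes h$ agrees with the target. Unrolling the recursion gives $(\bm{\gamma} \otimes h)(k) = \gamma_{k \bmod N} \cdot S^{s_k}(h)$ with $s_k = \sum_{j<k}(|\gamma_{j \bmod N}| - 1)$. The hypothesis $|\alpha_i| = |\beta_i|$ is precisely what makes $|\gamma_{j \bmod N}|$ depend only on $j \bmod n$, so $s_k$ coincides with the shift that $\bm{\alpha} \otimes h$ uses at step $k$ when $k$ is even and with the shift that $\bm{\beta} \otimes h$ uses at step $k$ when $k$ is odd. Combined with the definition of $\bm{\gamma}$, this yields $(\bm{\gamma} \otimes h)(k) = f(k)$ or $g(k)$ as appropriate, so $\bm{\gamma} \otimes h = fzip(f(2n), g(2n+1))$. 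Naturality of $\bm{\gamma} \otimes h$ is immediate since each of its values equals a value of $f$ or $g$, both of which map into $\mathbb{N}$. Theorem 2.6 applied with $n_0 = m_0 = 0$ then concludes the proof.

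The main obstacle is the cyclic bookkeeping, particularly when $n$ is odd: the interleaved tuple has length $2n$ rather than $n$ and traverses $\bm{\alpha}$ and $\bm{\beta}$ at mod-$n$ residues that are shifted on the second half of the cycle. Choosing $N$ to accommodate both parities at once makes this uniform, and then the remaining content of the argument is a routine unrolling of the weight-product recursion combined with the length-matching hypothesis on $\bm{\alpha}$ and $\bm{\beta}$.
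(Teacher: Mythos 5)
Your proposal is correct and matches the paper's argument essentially step for step: you build the same interleaved tuple $\bm{\gamma}$ (your uniform choice of $N=n$ or $N=2n$ is exactly the paper's even/odd case split), and you verify $\bm{\gamma}\otimes h = fzip(f(2n),g(2n+1))$ by the same observation that the hypothesis $|\alpha_i|=|\beta_i|$ forces the shifts of $h$ to coincide with those used by $\bm{\alpha}\otimes h$ and $\bm{\beta}\otimes h$. No substantive differences from the paper's proof.
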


\begin{proof}
We will begin by constructing a new weight $\bm{\gamma}$ from $\bm{\alpha}$ and $\bm{\beta}$, and showing that $(\bm{\gamma} \otimes h)(n)$ is equal to $\fzip(f(2n),g(2n+1))$. We have two different cases for constructing $\bm{\gamma}$ depending on if $m$ is even or odd. \\

Case 1: $m$ is even \\

If $m$ is even, simply take $\bm{\alpha}$ and replace each $\alpha_{2i+1}$ with the corresponding $\beta_{2i+1}$ to obtain $\bm{\gamma}$. Then $\bm{\gamma} = (\alpha_0, \beta_1, \alpha_2, \beta_3, ... \beta_{m-1})$. \\

Case 2: $m$ is odd \\

Start as before, by replacing the $\alpha_{2i+1}$'s in $\bm{\alpha}$ with the $\beta_{2i+1}$'s, but this time we will also be replacing the $\beta_{2i+1}$'s in $\bm{\beta}$ with the $\alpha_{2i+1}$ weights from $\bm{\alpha}$. Let's call these new weight tuples $\bm{\alpha'} = (\alpha_0, \beta_1, \alpha_2, \beta_3, ... , \alpha_{m-1})$ and $\bm{\beta'} = (\beta_0, \alpha_1, \beta_2, \alpha_3, ... , \beta_{m-1})$. Now we define $\bm{\gamma}$ to be the weight tuple obtained by concatenating $\bm{\alpha'}$ and $\bm{\beta'}$, i.e. $\bm{\gamma} = ((\alpha_0, \beta_1, \alpha_2, \beta_3, ... , \alpha_{m-1},\beta_0, \alpha_1, \beta_2, \alpha_3, ... , \beta_{m-1})$. \\

Now that we have a definition of $\bm{\gamma}$, we can proceed to proving that $(\bm{\gamma} \otimes h)(n) = \fzip(f(2n),g(2n+1))$. We show this by proving that for even $n$ we have $(\bm{\gamma} \otimes h)(n) = (\bm{\alpha} \otimes h)(n) = f(n)$, and for odd $n$ we have $(\bm{\gamma} \otimes h)(n) = (\bm{\beta} \otimes h)(n) = g(n)$. Let's start by assuming $n$ is even, and from the proof for even $n$ it will be clear that the proof for odd $n$ is identical. \\

If $n$ is even, then by construction of $\bm{\gamma}$ and the definition of the weight product, we have for some $k$,  $(\bm{\gamma} \otimes h)(n) = \alpha_m \cdot S^k(f)$. Similarly, for some $k'$, $(\bm{\alpha} \otimes h)(n) = \alpha_m \cdot S^{k'}(f)$. Letting $L = ||\bm{\gamma}||$, $m$ is equal to $n$ mod L (intuitively, $m$ is the number of times we go through all of the weights in $\bm{\gamma}$). We will show that $k = k'$. \\

The key to this proof is the condition that each $\alpha_i$ has the same length as the corresponding $\beta_i$. Because of this condition, whenever we apply the recursive part of the weight product definition, we are shifting $f$ by the same amount whether we are applying the weights from $\bm{\gamma}$ (which are half $\alpha_i$'s and half $\beta_i$'s) or weights from $\bm{\alpha}$ only. More rigorously, we can see this by calculating the exact values of $k$ and $k'$. 

We have that $k$ is equal to $(n-m)\frac{L}{m} + \sum_{i=0}^{m-1} (|\gamma_i| - 1)$, where $\gamma_i$ is the $i$th weight in $\bm{\gamma}$ ($\gamma_i$ alternates between $\alpha_i$ and $\beta_i$). Similarly, $k' =  (n-m)\frac{L}{m} + \sum_{i=0}^{m-1} (|\alpha_i| - 1)$. From the length condition on the weights in $\bm{\alpha}$ and $\bm{\beta}$, $|\gamma_i| = |\alpha_i|$ for all $i$, and therefore $k = k'$. This means that for even $n$, $(\bm{\gamma} \otimes h)(n) = (\bm{\alpha} \otimes h)(n)$, and by definition of $\bm{\alpha}$, $(\bm{\alpha} \otimes h)(n) = f(n)$. Therefore $(\bm{\gamma} \otimes h)(n) = f(n)$ for even $n$. The case where $n$ is odd proceeds mostly identically, replacing $\bm{\alpha}$ with $\bm{\beta}$ (and $f$ with $g$) where appropriate. So now we have shown that $(\bm{\gamma} \otimes h)(n) = f(n)$ for even $n$ and $(\bm{\gamma} \otimes h)(n) = g(n)$ for odd $n$, which is exactly the definition of $\fzip(f(2n),g(2n+1))$. Therefore $(\bm{\gamma} \otimes h)(n) = \fzip(f(2n), g(2n+1))$ and thus $\langle h(n) \rangle \geq \langle \fzip(f(2n), g(2n+1)) \rangle$.
\end{proof}

\section{A Diamond Structure in the Transducer Hierarchy}

Now we proceed to our main result. We need two more lemmas before moving on to the main proof.

\begin{lemma}

For all polynomials $f$ of degree 2, we can find a weight $\alpha$ of length 2 and integers k, m such that $\alpha \otimes S^k(n^2) = S^mf(n)$.

\end{lemma}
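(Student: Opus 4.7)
My plan is direct coefficient matching. Write $f(n) = An^2 + Bn + C$ with $A > 0$, and let $\alpha = \langle a_0, a_1, b \rangle$ with $a_0, a_1 \ge 0$; I interpret ``length $2$'' to mean $|\alpha| - 1 = 2$, i.e.\ $\alpha$ consumes two values of $f$ per application (anything smaller cannot give enough degrees of freedom to hit a general quadratic). Unfolding the weight product recursion then gives
\[
(\alpha \otimes S^k(n^2))(n) \;=\; a_0 (2n+k)^2 + a_1 (2n+k+1)^2 + b,
\]
and expanding this and equating it coefficient-by-coefficient with $f(n+m) = An^2 + (2Am + B) n + (Am^2 + Bm + C)$ yields three linear equations in $a_0, a_1, b$, parametrised by the integer shifts $k, m$.

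Next I would solve the system. The $n^2$- and $n$-equations decouple from $b$ and give $a_0 + a_1 = A/4$ together with $4a_1 = 2Am + B - Ak$, so
\[
a_1 \;=\; \frac{2Am + B - Ak}{4}, \qquad a_0 \;=\; \frac{A(k - 2m + 1) - B}{4}.
\]
The constant-term equation then fixes $b \in \mathbb{Q}$ uniquely, with no further constraint since a weight allows any rational constant.

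The substantive step is showing that $k, m$ can be chosen as non-negative integers while keeping $a_0, a_1 \ge 0$. Both non-negativity conditions collapse to the single requirement
\[
k \;\in\; \bigl[\, 2m - 1 + B/A,\ 2m + B/A \,\bigr],
\]
a closed interval of length exactly $1$, which therefore always contains at least one integer; taking $m$ large enough pushes the interval into $\mathbb{N}$. Naturality of the resulting weight product is then automatic, because $f(n+m) \in \mathbb{N}$ for every $n \in \mathbb{N}$.

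I expect this length-one-interval argument to be the only real obstacle; the rest is bookkeeping. A length-$2$ weight supplies exactly the right parameter count, the two non-negative rationals $a_0, a_1$, one free rational $b$, and two integer shifts $k, m$, to hit the three coefficients of an arbitrary quadratic. The only degenerate case is when $B/A \in \mathbb{Z}$, in which case one of $a_0, a_1$ is forced to $0$; this is still permitted by the definition of a weight, so the construction goes through uniformly in $f$.
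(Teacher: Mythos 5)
Your proposal is correct and takes essentially the same approach as the paper: a single weight $\langle a_0, a_1, b\rangle$ applied to a shifted $n^2$, with the shift chosen so that both $a_0$ and $a_1$ land in the length-one nonnegativity window. Indeed, specializing your general solution to $m = 0$ and $k = \lfloor B/A \rfloor$ recovers exactly the explicit weight the paper writes down and verifies; you merely derive it by coefficient matching rather than pulling it out and checking.
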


\begin{proof}

Let $f(n) = an^2 + bn + c$. We assume for simplicity that $c = 0$, since the constant term is irrelevant for transducer degrees. We can also assume that $b > 0$, since if not we can simply shift $f$ (i.e. choose a positive $m$) until this holds. Then we claim that the weight $\alpha = \frac{1}{4} \langle a - b + a \lfloor \frac{b}{a} \rfloor, b - a \lfloor \frac{b}{a} \rfloor, -(2b\lfloor \frac{b}{a} \rfloor + b - a\lfloor \frac{b}{a} \rfloor^2) \rangle$, and the integers $k = \lfloor \frac{b}{a} \rfloor, m = 0$, prove the lemma.

Indeed, this is simply a matter of verifying this computationally. From the definition of a weight product, since $\alpha$ only contains one weight, we have \\

\begin{align*}
(\alpha \otimes S^{\lfloor \frac{b}{a} \rfloor}n^2)(n) &= \frac{1}{4}(a - b + a \lfloor \frac{b}{a} \rfloor)(2n+\lfloor \frac{b}{a} \rfloor)^2 \\
&\quad + \frac{1}{4}(b - a \lfloor \frac{b}{a} \rfloor)(2n+1+\lfloor \frac{b}{a} \rfloor)^2 \\
&\quad -\frac{1}{4}(2b\lfloor \frac{b}{a} \rfloor + b - a\lfloor \frac{b}{a} \rfloor^2) \\
&= \frac{1}{4}(a - b + a \lfloor \frac{b}{a} \rfloor)(4n^2 + 4n\lfloor \frac{b}{a} \rfloor + \lfloor \frac{b}{a} \rfloor^2) \\
&\quad + \frac{1}{4}(b - a \lfloor \frac{b}{a} \rfloor)(4n^2 +4n + 4n\lfloor \frac{b}{a} \rfloor +2\lfloor \frac{b}{a} \rfloor \\
&\quad + \lfloor \frac{b}{a} \rfloor^2 + 1) - \frac{1}{4}(2b\lfloor \frac{b}{a} \rfloor + b - a\lfloor \frac{b}{a} \rfloor^2) \\
&= \frac{1}{4}(4an^2 + 4an\lfloor \frac{b}{a} \rfloor + a\lfloor \frac{b}{a} \rfloor^2 - 4bn^2 - 4bn\lfloor \frac{b}{a} \rfloor - b\lfloor \frac{b}{a} \rfloor^2 \\
&\quad + 4an^2\lfloor \frac{b}{a} \rfloor + 4an\lfloor \frac{b}{a} \rfloor^2 + a\lfloor \frac{b}{a} \rfloor^3) \\
&\quad + \frac{1}{4}(4bn^2 + 4bn + 4bn\lfloor \frac{b}{a} \rfloor + 2b\lfloor \frac{b}{a} \rfloor + b + b\lfloor \frac{b}{a} \rfloor^2 \\
&\quad - 4an^2\lfloor \frac{b}{a} \rfloor - 4an\lfloor \frac{b}{a} \rfloor - 4an\lfloor \frac{b}{a} \rfloor^2 - 2a\lfloor \frac{b}{a} \rfloor^2 \\
&\quad -a\lfloor \frac{b}{a} \rfloor - a\lfloor \frac{b}{a} \rfloor^3) -\frac{1}{4}(2b\lfloor \frac{b}{a} \rfloor + b - a\lfloor \frac{b}{a} \rfloor^2) \\
&= \frac{1}{4}(4an^2 + 4bn) + \frac{1}{4}(2b\lfloor \frac{b}{a} \rfloor + b - a\lfloor \frac{b}{a} \rfloor^2) \\
&\quad - \frac{1}{4}(2b\lfloor \frac{b}{a} \rfloor + b - a\lfloor \frac{b}{a} \rfloor^2) \\
&= an^2 + bn
\end{align*}

However, we do need to be careful here, and make sure to check that $\alpha$ is indeed a valid weight. So we need its entries to both be nonnegative.

Because $\lfloor \frac{b}{a} \rfloor \leq \frac{b}{a}$, $b - a \lfloor \frac{b}{a} \rfloor \geq b - a \frac{b}{a} = 0$. So the second entry is nonnegative. Similarly, since $\lfloor \frac{b}{a} \rfloor \geq \frac{b}{a} - 1$, $\langle a - b + a \lfloor \frac{b}{a} \rfloor \geq \langle a - b + a(\frac{b}{a} - 1) = a - b +b - a = 0$ and the first entry is also nonnegative. So $\alpha$ is a valid weight, and $\alpha \otimes S^k(n^2) = f(n)$.
\end{proof}

Now we prove the second lemma, which gives us essentially the inverse statement of the previous lemma. The basic computations behind this lemma were modified from Theorem 5.2 in \cite{2} to better suit the purposes of this paper.

\begin{lemma}
Let $f$ be a quadratic function of the form $a(n+1)^2 + b(n+1)$, with $2a > b > 0$. Then there is a weight $\alpha$ of length 2 such that $\alpha \otimes f$ = $(n+1)^2$.
\end{lemma}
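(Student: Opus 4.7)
The plan is to exploit the fact that when the tuple $\alpha$ consists of a single weight of length two (two coefficients and a constant term), the recursive definition of the weight product collapses to a simple closed form. Writing this weight as $\alpha = \langle c_0, c_1, d \rangle$, the cyclic shift $\alpha'$ equals $\alpha$ itself, and each recursive step advances $f$ by two positions. Unrolling the recursion gives
\[
(\alpha \otimes f)(n) = c_0\, f(2n) + c_1\, f(2n+1) + d
\]
for every $n$. So the lemma reduces to finding rationals $c_0, c_1 \geq 0$ and $d$ for which this expression equals $(n+1)^2$ identically in $n$; once such rationals are in hand, Lemma 2.4 lets me clear denominators without changing the degree, producing the natural weight demanded by the statement.

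Given $f(n) = a(n+1)^2 + b(n+1)$, I would substitute $f(2n) = a(2n+1)^2 + b(2n+1)$ and $f(2n+1) = a(2n+2)^2 + b(2n+2)$ into the right-hand side, expand as a polynomial in $n$, and equate coefficients of $n^2$, $n$, and the constant term against $n^2 + 2n + 1$. This yields three linear equations in three unknowns. The quadratic equation gives $c_0 + c_1 = \tfrac{1}{4a}$; combined with the linear equation it pins down $c_0$ and $c_1$ uniquely; the constant equation then fixes $d$.

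The only delicate point is verifying that the resulting $c_0$ and $c_1$ are nonnegative so that $\alpha$ is in fact a valid weight. Solving the system, I expect values of the form $c_0 = \tfrac{b}{8a^2}$ and $c_1 = \tfrac{2a-b}{8a^2}$, whose nonnegativity is exactly the content of the hypothesis $2a > b > 0$. The constant $d$ carries no sign constraint (Definition 2.5 restricts only the coefficients of a weight), so no further assumptions are needed. Scaling through by $8a^2$ then produces a natural weight of length two with $\alpha \otimes f = (n+1)^2$, completing the proof. The main obstacle is really just organizing the algebra cleanly enough that the role of the inequality $2a > b > 0$ is transparent; there is no conceptual difficulty beyond that.
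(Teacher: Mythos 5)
Your proposal is correct and follows essentially the same route as the paper: a single length-two weight applied to $f$, with coefficients $\tfrac{b}{8a^2}$ and $\tfrac{2a-b}{8a^2}$ (exactly the weight the paper writes down) and nonnegativity coming from $2a > b > 0$; the paper simply asserts the weight and says "verify computationally," whereas you derive it by equating coefficients. One tiny caveat: since Definition 2.5 already allows rational entries, the final rescaling by $8a^2$ is unnecessary and would in fact change the output to $8a^2(n+1)^2$ rather than $(n+1)^2$ (same degree, but not the literal equality the lemma states).
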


\begin{proof}
We claim that the weight $\alpha = \frac{1}{8a^2}(b, 2a-b, b^2+ab+6a^2+1)$ satisfies $(\alpha \otimes f)(n) = (n+1)^2$. This can be verified computationally, by simplifying the expression $\frac{1}{8a^2}(bf(2n) + (2a-b)f(2n+1))$, and noting that since $2a > b > 0$, both $b$ and $2a-b$ are positive, making $\alpha$ a valid weight.
\end{proof}

\begin{theorem}

The degree $\langle \fzip(n,n^2) \rangle$ is strictly greater than $\langle n \rangle$, and there are no intermediate degrees.

\end{theorem}

\begin{proof}

The fact that $\langle \fzip(n,n^2) \rangle$ is strictly greater than $\langle n \rangle$ is trivial, since if $n$ could be transduced into $\fzip(n,n^2)$, it could also be transduced into $n^2$, which has been shown to be impossible. Proving that there are no intermediate degrees is the nontrivial part.

We set out to prove this statement by assuming that there is some intermediate degree. Since $\fzip(n,n^2)$ is a piecewise polynomial function, the degree of anything below it is equivalent to $\langle g \rangle$ for some (piecewise) polynomial function $g$, and in particular $g$ is a weight product of $\fzip(n,n^2)$. Because $\fzip(n,n^2)$ is a piecewise polynomial function, any weight product will also be a piecewise polynomial function, and the pieces will all be linear or quadratic polynomials. (We can remove constant functions without any loss of degree). So then there are three cases: \\

Case 1: $g$ has only quadratic polynomials as its pieces. \\

Case 2: $g$ has only linear polynomials as its pieces. \\

Case 3: $g$ has both linear and quadratic polynomials as its pieces. \\ 

Recall that by ``pieces", we mean that since $g$ can be written as a function which is defined piecewise by N different functions, we call these functions the ``pieces" of $g$.

Now we consider each case. For Case 1, $g$ cannot be transduced into $n$ if each piece is quadratic, since the block size grows too fast to allow this. For Case 2, where $g$ is a piecewise linear function, suppose $g$ is equal to: 

\begin{equation*}
g(n)=\begin{cases}
          a_1n+b_1 \quad &\text{n $\equiv$ 0modN}\\
          a_2n+b_2 \quad &\text{n $\equiv$ 1modN}\\
	\vdots \\
	a_Nn+b_N \quad &\text{n $\equiv$ N-1modN}\\
     \end{cases}
\end{equation*}

Then taking the product of the weight $\bm{\alpha}$ $= (\alpha_1, \alpha_2, \cdots, \alpha_N)$, where $\alpha_i = (a_i, b_i)$, with $n$ shows that in this case $\langle n \rangle \geq \langle g \rangle$, and since $g$ is clearly transducible to $n$ they must be of equal degree.

Now we have only Case 3, where $g$ has both linear and quadratic pieces. We will show that $g$ can be transduced back into $\fzip(n,n^2)$, and this will complete the proof.

First, we can assume that the first piece of $g$ is linear by simply deleting blocks from the beginning of $\langle g \rangle$.

We can then combine all of the other pieces of $g$ via a weight of the form $\langle (1,0),(1,1,\cdots,1,0) \rangle$, which preserves the first piece and sums up all of the other pieces. So now $g$ has the form 

\begin{equation*}
g_1(n)=\begin{cases}
          an+b \quad &\text{n even}\\
          An^2+Bn+C \quad &\text{n odd}\\
     \end{cases}
\end{equation*}

We can easily construct a transducer to subtract $b$ from only the even-numbered blocks and $C$ from the odd-numbered blocks, and also a transducer to divide even-numbered blocks by $a$. Then $g_2$ has the form

\begin{equation*}
g_2(n)=\begin{cases}
          n \quad &\text{n even}\\
          An^2+Bn \quad &\text{n odd}\\
     \end{cases}
\end{equation*}

Remove the first two blocks to shift $g_2$ to the right by 2, and then subtract 2 from all even blocks to obtain:

\begin{equation*}
g_3(n)=\begin{cases}
          n \quad &\text{n even}\\
          A(n+2)^2+B(n+2) \quad &\text{n odd}\\
     \end{cases}
\end{equation*}

Note that we can rewrite $A(n+2)^2+B(n+2)$ as $An^2+4An+4+Bn+2B = An^2 + (4A+B)n+2B+4 = An^2 + B'n + C'$, where $B' = 4A+B$ and $C' = 2B+4$. We can repeat this process indefinitely, and clear away the $C'$'s at the end by repeating the step from $g_1(n)$ to $g_2(n)$. Note that we can also reverse this process to get an expression of the form $A(n-2)^2+B(n-2)$, and so we can also arbitrarily lower the value of $B$. Since we're adding or subtracting multiples of $4A$ to $B$, we can return to the form of $g_2(n)$ with a guarantee that $2A \leq B < 6A$.

\begin{equation*}
g_4(n)=\begin{cases}
          n \quad &\text{n even}\\
          An^2+Bn \quad &\text{n odd}\\
     \end{cases}
\end{equation*}

Now let $h(n) = An^2+Bn$. Define a new function $f(n) = h(2n+1)$. Rewriting $f(n)$ in terms of $n+1$ gives $f(n) = 4A(n+1)^2+(-4A+2B)(n+1)+D$, where $D$ is the constant term. We have already seen that the constant term is easily eliminated, so its exact value is irrelevant.

We now want to apply Lemma 4.2 to $f(n)$. Since $2A \leq B < 6A$, the coefficient $(-4A+2B)$ of $(n+1)$ satisfies $0 \leq -4A+2B < 8A$. Note that $8A$ is two times the coefficient of $(n+1)^2$, and therefore Lemma 4.2 can be applied unless $-4A+2B=0$. But that would be the case $B = 2A$,  and then $g_4(n)$ would be easily transducable to $g_5(n)$ by completing the square and dividing by 4A. So we can assume that $B \neq 2A$. 

Therefore $f(n)$ can be transduced into $(n+1)^2$ via a weight product with only one weight of length 2. Let's call this weight $\bm{\alpha}$ = $(\alpha_1, \alpha_2, \beta)$ (The lemma tells us exactly what these are, but it's not important). Now we can take the product of the modified weight $\bm{\alpha^*} = \langle (\frac{1}{2},0), (\alpha_1,0,\alpha_2, \beta) \rangle$ with $g_4$. This weight allows us to ignore the alternating nature of $g_4$ and only transduce the $An^2+Bn$ piece. Since this part of the proof is very technical, we include some tables and diagrams to clarify the situation. First, a table clarifying exactly what $g_4(n)$ looks like.

\begin{center}
  \begin{tblr}{hlines, vlines, columns={co=1}} 
    n & 0 & 1 & 2 & 3 & 4 & 5 & 6 & 7 & 8   \\
    $g_4(n)$ & 0 & h(1) = f(0) & 2 & h(3) = f(1) & 4 & h(5) = f(2) & 6 & h(7) = f(3) & 8   \\
  \end{tblr}
\end{center} 

Now we examine the weight product of $\bm{\alpha^*}$ with $g_4(n)$.

\begin{enumerate}
\item[] 
\begin{tikzpicture}[grow=up]
\Tree [.{$(\bm{\alpha^*} \otimes g_4)(0) = \frac{1}{2}g_4(0) + 0 = 0$} [.{$\times \frac{1}{2}$} $g_4(0)$ ] ]
\end{tikzpicture}
\hskip 0.1in
\begin{tikzpicture}[grow=up]
\Tree [.{$(\bm{\alpha^*} \otimes g_4)(1) = \alpha_1g_4(1) + 0 + \alpha_2g_4(3) + \beta$} [.{$\times \alpha_2$} $g_4(3)$ ] [.{$\times 0$} $g_4(2)$ ] [.{$\times \alpha_1$} $g_4(1)$ ] ]
\end{tikzpicture}
\end{enumerate}

\begin{enumerate}
\item[] 
\begin{tikzpicture}[grow=up]
\Tree [.{$(\bm{\alpha^*} \otimes g_4)(2) = \frac{1}{2}g_4(4) + 0 = 2$} [.{$\times \frac{1}{2}$} $g_4(4)$ ] ]
\end{tikzpicture}
\hskip 0.1in
\begin{tikzpicture}[grow=up]
\Tree [.{$(\bm{\alpha^*} \otimes g_4)(3) = \alpha_1g_4(5) + 0 + \alpha_2g_4(7) + \beta$} [.{$\times \alpha_2$} $g_4(7)$ ] [.{$\times 0$} $g_4(6)$ ] [.{$\times \alpha_1$} $g_4(5)$ ] ]
\end{tikzpicture}
\end{enumerate}

From the above diagram giving the first four values of $\bm{\alpha^*} \otimes g_4$, it should be believable that the values on even $n$ are exactly $n$, as was the case for $g_4$ itself. The values for odd $n$ are a bit more complicated. The general pattern is that for odd $n$, $(\bm{\alpha^*} \otimes g_4)(n) = \alpha_1g_4(2n-1)+\alpha_2g_4(2n+1)+\beta$. However, by definition of the function $h(n)$ from earlier, this is equivalent to $(\bm{\alpha^*} \otimes g_4)(n) = \alpha_1h(2n-1) + \alpha_2h(2n+1)+\beta$. But if we redefine this in terms of $f(n)$, we obtain $(\bm{\alpha^*} \otimes g_4)(n) = \alpha_1f(n-1) + \alpha_2f(n)+\beta$. 

Now the right hand side of this equation is in fact equal to $(\bm{\alpha} \otimes f)(\frac{n-1}{2})$. But $\bm{\alpha}$ was chosen to be the weight such that $(\bm{\alpha} \otimes f)(n) = (n+1)^2$. Therefore, for odd $n$, $(\bm{\alpha^*} \otimes g_4)(n) = (\frac{n-1}{2}+1)^2 = (\frac{n+1}{2})^2$, and renaming $(\bm{\alpha^*} \otimes g_4)(n)$ to $g_5(n)$ gives:   

\begin{equation*}
g_5(n)=\begin{cases}
          n \quad &\text{n even}\\
          (\frac{n+1}{2})^2 \quad &\text{n odd}\\
     \end{cases}
\end{equation*}

Now we can add two blocks to shift $g_5$ to the left by 2, then add 2 and divide by 2 on all even numbered blocks to preserve $n$:

\begin{equation*}
g_6(n)=\begin{cases}
          \frac{n}{2} \quad &\text{n even}\\
          (\frac{n-1}{2})^2 \quad &\text{n odd}\\
     \end{cases}
\end{equation*}

which is exactly $\fzip(n,n^2)$. Since each $g_i$ was formed by transducing $g_{i-1}$, we have that $\langle g \rangle \geq \langle \fzip(n,n^2) \rangle$, and thus $\langle g \rangle = \langle \fzip(n,n^2) \rangle$. \\

Therefore there are no intermediate degrees between $\fzip(n,n^2)$ and $n$.
\end{proof}

Now we proceed to prove the same result for $n^2$.

\newpage

\begin{theorem}

The degree $\langle \fzip(n,n^2) \rangle$ is strictly greater than $\langle n^2 \rangle$, and there are no intermediate degrees.

\end{theorem}

\begin{proof}

The proof will proceed in a similar manner to the previous theorem. Again the fact that $\langle \fzip(n,n^2) \rangle$ is strictly greater than $\langle n^2 \rangle$ is trivial, since otherwise $n^2$ could be transduced into $n$. So we only need to prove that there are no intermediate degrees.

Letting $g$ be a potential intermediate degree, we have the same three cases as before: \\

Case 1: $g$ has only quadratic polynomials as its pieces. \\

Case 2: $g$ has only linear polynomials as its pieces. \\

Case 3: $g$ has both linear and quadratic polynomials as its pieces. \\ 

Case 2 is not possible because we showed in the previous theorem that such a $g$ would be the same degree as $n$. For Case 3, the previous theorem proved that $\langle g \rangle$ would be equal to $\langle \fzip(n,n^2) \rangle$. So we need to turn our attention to Case 1. For this case, we will prove that $g$ is the same degree as $n^2$, and this will complete the proof.

If $g$ has only quadratic polynomials as its pieces, it has the form:

\begin{equation*}
g(n)=\begin{cases}
          a_1n^2+b_1n+c_1 \quad &\text{n $\equiv$ 0modN}\\
          a_2n^2+b_2n+c_2 \quad &\text{n $\equiv$ 1modN}\\
	\vdots \\
	a_Nn^2+b_Nn+c_N \quad &\text{n $\equiv$ N-1modN}\\
     \end{cases}
\end{equation*}

We can remove all of the constant terms:

\begin{equation*}
g(n)=\begin{cases}
          a_1n^2+b_1n \quad &\text{n $\equiv$ 0modN}\\
          a_2n^2+b_2n \quad &\text{n $\equiv$ 1modN}\\
	\vdots \\
	a_Nn^2+b_Nn \quad &\text{n $\equiv$ N-1modN}\\
     \end{cases}
\end{equation*}

Now using the result of Lemma 4.1, let $\bm{\alpha} = (\alpha_1, \alpha_2, \cdots, \alpha_N)$ where $\alpha_i =  \frac{1}{4} \langle a_i - b_i + a_i \lfloor \frac{b_i}{a_i} \rfloor, b_i - a_i \lfloor \frac{b_i}{a_i} \rfloor, -(2b_i\lfloor \frac{b_i}{a_i} \rfloor + b_i - a_i\lfloor \frac{b_i}{a_i} \rfloor^2) \rangle$. If we take the product of this weight with $n^2$, then we obtain $g$. To see that this is the case, we will do another example weight product calculation, for the case $N = 2$, and use a similar diagram as the previous theorem to illustrate our point. 

Let the weight $\bm{\alpha} = (\alpha_1, \alpha_2)$, where $\alpha_1 = (r_1, r_2, s)$ and $\alpha_2 = (u_1, u_2, v)$. The exact values of these are as above. The key point is that $\alpha_1, \alpha_2$ satisfy the equations $(\alpha_1 \otimes f)(n) = a_1n^2+b_1n$ and $(\alpha_2 \otimes f)(n) = a_2n^2+b_2n$, where $f(n) = n^2$. Now if we calculate the first four values of $\bm{\alpha} \otimes n^2$, we get the following:

\begin{enumerate}
\item[] 
\begin{tikzpicture}[grow=up]
\Tree [.{$(\bm{\alpha} \otimes f)(0) = r_1f(0)+r_2f(1)+s$} [.{$\times r_2$} $f(1)$ ] [.{$\times r_1$} $f(0)$ ] ]
\end{tikzpicture}
\hskip 0.1in
\begin{tikzpicture}[grow=up]
\Tree [.{$(\bm{\alpha} \otimes f)(1) = u_1f(2)+u_2f(3)+v$} [.{$\times u_2$} $f(3)$ ] [.{$\times u_1$} $f(2)$ ] ]
\end{tikzpicture}
\end{enumerate}

\begin{enumerate}
\item[] 
\begin{tikzpicture}[grow=up]
\Tree [.{$(\bm{\alpha} \otimes f)(2) = r_1f(4)+r_2f(5)+s$} [.{$\times r_2$} $f(5)$ ] [.{$\times r_1$} $f(4)$ ] ]
\end{tikzpicture}
\hskip 0.1in
\begin{tikzpicture}[grow=up]
\Tree [.{$(\bm{\alpha} \otimes f)(3) = u_1f(6)+u_2f(7)+v$} [.{$\times u_2$} $f(7)$ ] [.{$\times u_1$} $f(6)$ ] ]
\end{tikzpicture}
\end{enumerate}

Looking at this example, we can now see that in fact, $(\bm{\alpha} \otimes f)(n) = (\alpha_1 \otimes f)(n)$ for even $n$ and $(\bm{\alpha} \otimes f)(n) = (\alpha_2 \otimes f)(n)$ for odd $n$. But because of the way that $\alpha_1$ and $\alpha_2$ were defined, this means that $(\bm{\alpha} \otimes f)(n) = a_1n^2+b_1n$ for even $n$ and $(\bm{\alpha} \otimes f)(n) = a_2n^2+b_2n$ for odd $n$, which is exactly the definition of $g$ in the case $N = 2$. So $\bm{\alpha} \otimes n^2 = g$. The general case proceeds effectively the same way. The key property of the weight tuple $\bm{\alpha}$ for this proof is that all of the weights in the tuple are the same length, which allows $n^2$ to be turned into a piecewise polynomial like $g$ with each weight corresponding exactly to one piece. The $N = 2$ case could also be thought of in terms of Lemma 3.6, with the general case being provable via an extension of that lemma.

Therefore $\langle n^2 \rangle \geq \langle g \rangle$. But $g$ was a degree between $\fzip(n,n^2)$ and $n^2$, and therefore $\langle n^2 \rangle = \langle g \rangle$. Since this was the last case for $g$, there are no intermediate degrees between $\fzip(n,n^2)$ and $n^2$.
\end{proof}

\section{Conclusion}

We have shown that $\fzip(n,n^2)$ lies strictly above both $n$ and $n^2$, with no intermediate degrees between them. From earlier results, we know that the degrees of both $n$ and $n^2$ are atoms, that is, there is nothing between them and the bottom degree $\textbf{0}$. Therefore $\fzip(n,n^2)$ forms a diamond structure with $n, n^2$ and $\textbf{0}$, and this is the first such structure that has been found. We also note that by the case analysis for potential transducts of $\fzip(n,n^2)$ in the previous section, $n$ and $n^2$ are the only degrees below $\fzip(n,n^2)$. This result sheds more light on the structure of the transducer hierarchy, and also raises some further questions about the potential use of $\fzip$ to find new results. We state a few of these questions here:

\begin{enumerate}
\item We know that $\langle \fzip(f,f) \rangle \geq \langle f \rangle$. In general, is this inequality strict, or can $\langle \fzip(f,f) \rangle = \langle f \rangle$ for more than just linear or quadratic $f$'s?

\item What can we say about $\langle \fzip(f,g) \rangle$ when $f,g$ are both cubic polynomials?

\item For some $f,g$ is it possible to find degrees between $f$ and $\fzip(f,g)$?

\item If $\langle f_1 \rangle \geq \langle f_2 \rangle$ and $\langle g_1 \rangle \geq \langle g_2 \rangle$, then is $\langle \fzip(f_1,g_1) \rangle \geq \langle \fzip(f_2,g_2) \rangle$? 
\end{enumerate}

\bibliographystyle{abbrvnat}
\bibliography{bibliography}

\end{document}